
\documentclass[twocolumn,pra,amsmath,nofootinbib]{revtex4} 
\usepackage[utf8]{inputenc}
\usepackage[pdftex,colorlinks]{hyperref}
\usepackage{amsthm}
\hypersetup{urlcolor=blue}
\newtheorem*{theorem}{Theorem}
\newcommand{\sgn}{\mathrm{sgn}}
\begin{document}
\title{Determination of hidden variable models reproducing the spin-singlet}
\author{Antonio \surname{Di Lorenzo}}
\affiliation{Instituto de F\'{\i}sica, Universidade Federal de Uberl\^{a}ndia, 38400-902 Uberl\^{a}ndia, Minas Gerais, Brazil}
\begin{abstract}
The experimental violation of Bell inequality establishes necessary but not sufficient conditions that any theory must obey. 
Namely, a theory compatible with the experimental observations can satisfy at most two of the three hypotheses at the basis of Bell\rq{}s theorem: 
free will, no-signaling, and outcome-Independence. 
Quantum mechanics satisfies the first two hypotheses but not  the latter. 
Experiments not only violate Bell inequality, but show an excellent agreement with quantum mechanics. 
This fact restricts further the class of admissible theories. 
In this work, the author determines the form of the hidden-variable models that  
reproduce the quantum mechanical predictions for a spin singlet while 
satisfying both the hypotheses of free will and no-signaling. 
Two classes of hidden-variable models are given as an example, and a general recipe to build 
infinitely many possible models is provided. 
\end{abstract}
\maketitle

\section{Introduction}
Are there theories more fundamental than quantum mechanics?  
Since the groundbreaking work of Bell \cite{Bell1964} 
this question has garnered increasing attention. 
These purported more fundamental theories are known as hidden-variable models, 
since they rely on the existence of parameters, the hidden variables, 
that are distinct from the wave function and from the classical observables  
(as energy, positions, etc.).  

A considerable result was achieved by Bell and perfected by others \cite{Bell1964,Clauser1969,Clauser1974} who showed that a whole family of such theories 
could be experimentally tested even though no explicit hypothesis about their mathematical structure nor about the additional parameters was made,  
by requiring, instead, that the models lead to probabilistic predictions satisfying some \lq\lq{}reasonable\rq\rq{} assumptions.  
These assumptions, discussed at length below, are known as Measurement-Independence, Setting-Independence, and Outcome-Independence. 
The first one may be justified invoking both \cite{DiLorenzo2012b} the impossibility of action-at-a-distance and the independence of the measurement 
settings from any variables (\lq\lq{}free will\rq\rq{}); the second one is a consequence of the impossibility of superluminal signaling; 
the third assumption, however, is more difficult to justify \cite{Jarrett1984,Shimony1990}. 
More recently, Leggett \cite{Leggett2003} demonstrated the incompatibility of quantum mechanics with all models 
satisfying Measurement-Independence and a stronger form of Setting-Independence, the compliance with Malus\rq{}s law. 

Experiments \cite{Rowe2001,Lee2011} not only show a clear violation of both Bell and Leggett inequalities, but they reproduce accurately the 
predictions of quantum mechanics, since the discrepancies can be explained by the unavoidable imperfections of preparation and measurement. 
Thus, the constraints put on hidden variable models by the current experimental evidence are even stricter than the simple incompatibility with 
at least one of the three hypotheses of Bell (or of the two hypotheses of Leggett): after averaging over the hidden variables the predictions 
of quantum mechanics must be reproduced, in order for the theory to be admissible.

In the present paper, we consider models satisfying both Measurement-Independence and Setting-Independence, 
so that the principles of \lq\lq{}free will\rq\rq{} and no-signaling are satisfied. By building upon a recent theorem \cite{Colbeck2008,Branciard2008}, 
we shape the form of all such hidden-variable theories that are compatible with quantum mechanics and hence with experiments.

\section{The system and the goal}
The system of interest is a pair of particles in a spin-singlet configuration which fly to space-separated locations. We use the language of spin, rather than polarization of light, since the 
formulas are slightly more compact. The events consist in the determination of the spin projection along a given axis for each particle. We choose units such that the outcomes for each particle are 
$\sigma,\tau\in\{-1,1\}$.  
The measured observables are the spin projections $\mathbf{a}\cdot \hat{\mathbf{S}}_1$ and $\mathbf{b}\cdot \hat{\mathbf{S}}_2$, 
which we shall 
indicate simply by $\mathbf{a}$ and $\mathbf{b}$. For brevity, we write the conditional probability of observing the 
outcome $\{\sigma,\tau\}$ for given values of $\mathbf{a},\mathbf{b}$ and 
hidden variables $\lambda$ as $P(\sigma,\tau|\lambda,\mathbf{a},\mathbf{b})\equiv P_{\sigma,\tau}(\lambda,\mathbf{a},\mathbf{b})$. Quantum mechanics predicts that 
\begin{equation}
P^{QM}_{\sigma,\tau}(\psi,\mathbf{a},\mathbf{b})=\frac{1}{4}\left[1-\sigma\tau \mathbf{a}\cdot\mathbf{b}\right],
\end{equation}
where $\psi$ describe the preparation of two particles in a singlet state. 
Our goal is to determine a positive measure $d\mu$ and a conditional probability such that integration over the hidden variables yields $P^{QM}$, namely 
 \begin{equation}\label{eq:goal}
\int d\mu(\lambda|\mathbf{a},\mathbf{b}) P_{\sigma,\tau}(\lambda,\mathbf{a},\mathbf{b})= 
\frac{1}{4}\left[1-\sigma\tau \mathbf{a}\cdot\mathbf{b}\right].
\end{equation}
The preparation $\psi$ is henceforth omitted, and it is understood that it appears as a prior in all the probabilities. 

\section{Hypotheses at the basis of Bell and Leggett inequalities}
The models excluded by Bell inequality rely on three hypotheses: 
Measurement-Independence  (which we refer to as Uncorrelated Choice), 
Outcome-Independence (for which we propose the more descriptive term ``Reducibility of Correlations"), 
and Setting-Independence.
The models excluded by Leggett inequality rely on Uncorrelated Choice and 
compliance with Malus's law. 
As one or more assumptions must be violated, 
we shall briefly discuss the physical meaning of these assumptions, in order to 
individuate the least problematic hypotheses to drop. 
In the following, we shall use the word ``locality\rq\rq{}, by which we mean simply the impossibility of superluminal signaling. 

Uncorrelated Choice (UC), sometimes called Measurement-Independence, 
means that the distribution of the $\lambda$ and the 
settings of the detectors are uncorrelated. 
If one thinks of $\lambda$ as a set of parameters attached to 
the physical system, then Uncorrelated Choice follows from locality.
 However, it may happen that $\lambda$ 
is correlated with the choice of the observables to be detected, due to some past common 
cause \cite{Brans1988}, and in this case Uncorrelated Choice 
can be violated even though locality holds \cite{Brans1988,Hall2010,DiLorenzo2012b,DiLorenzo2012d}. 
Indeed, if one considers that the choice of settings, be it done by an automatic random mechanism 
or by a conscious being, can be influenced by events in the past light-cone of either station $A$ or $B$, and considers further that 
these light-cones have an intersection between themselves and with the past light-cone of the entangling apparatus, 
it is  possible, 
in principle, that there are correlations between the hidden variables and the choice of settings. 
Given our actual knowledge, however, this is a remote possibility. 
Usually, it implies a limitation of free will, or a conspiracy 
of sorts, but there is a possibility that what appears a conspiracy today 
is but a manifestation of some fundamental law.

Reducibility of Correlations (RC), 
known also as Outcome-Independence, means that the conditional probability of 
the outcome $\tau$, given $\lambda$ and given that the outcome of the measurement 
of $\mathbf{a}$ is $\sigma$, does not depend on the latter, namely 
$P_\tau(\lambda,\mathbf{a},\mathbf{b},\sigma)=P_\tau(\lambda,\mathbf{a},\mathbf{b})$, 
so that the joint probability is 
$P_{\sigma,\tau}(\lambda,\mathbf{a},\mathbf{b},\sigma)=
P_\sigma(\lambda,\mathbf{a},\mathbf{b})P_\tau(\lambda,\mathbf{a},\mathbf{b})$.
Hence, if the parameters $\lambda$ could be accessed, by either measuring them or fixing them, there would be no correlations. 
After averaging over $\lambda$, however, correlations appear. Thus Reducibility of Correlations means 
that the quantum correlations emerge from the ignorance of some more fundamental parameters.
In order to check whether Reducibility of Correlations holds, the observer at $A$ must 
calculate the conditional probability $P_\sigma(\tau,\mathbf{a},\mathbf{b},\lambda)$ (still assuming that $\lambda$ can be accessed by $A$), and check whether it varies 
when $\tau$ varies, while the other parameters are fixed. In order to do so, $A$ must have access to the remote information $\mathbf{b},\tau$, which 
$B$ can send only at  a speed not exceeding the speed of light.
Hence, violating Reducibility of Correlations 
does not imply action-at-a-distance, nor the possibility of instantaneous communication, 

Setting-Independence (SI) means that the marginal probability of observing the event 
$\sigma$ at $A$, 
for a given $\lambda$, does not depend on the setting $\mathbf{b}$, 
namely $P_\sigma(\lambda,\mathbf{a},\mathbf{b})=P_\sigma(\lambda,\mathbf{a})$. 
It may seem that the violation 
of Setting-Independence gives the possibility of instantaneous signaling, 
so that locality implies Setting-Independence.\footnote{Some authors, indeed, identify 
Setting-Independence with no-signaling, the impossibility of instantaneous communication, while they reserve the term ``locality" sometimes to mean Reducibility of Correlations, other times to mean both Reducibility of Correlations and 
Setting-Independence, and 
other times still to refer to the three hypotheses 
Uncorrelated Choice, Reducibility of Correlations, and Setting-Independence.}
However, this is true only if $\lambda$ 
has a fixed known value, or if it can be completely determined by a measurement 
at location $A$ (or $B$). 

Finally, compliance with Malus's law requires that the hidden-variables consist in a unit-vector 
such that the marginal probability is 
$P_\sigma(\mathbf{u},\mathbf{a},\mathbf{b})=\left(1+\sigma\mathbf{a}\cdot\mathbf{u}\right)/2$. 
Therefore, this hypothesis is a special case of Setting-Independence. We remark that this hypothesis tries to give 
a physical meaning to the hidden variables, assuming that they are made of unit vectors in such a way that each spin (or photon) 
possesses a well defined polarization, in such a way that, if the polarization could be fixed, the ordinary Malus\rq{}s law would be obeyed. 

By relaxing the hypothesis of Uncorrelated Choice, e.g., 
it is possible to violate both Bell and Leggett inequalities \cite
{Brans1988,Hall2010,DiLorenzo2012b,DiLorenzo2012d},   
a necessary condition to reproduce on average the results of quantum mechanics. 
Other possibilities explored in the literature consist in violating both Uncorrelated Choice 
and Reducibility of Correlations \cite{Cerf2005,Groblacher2007b,Brunner2008},  
or only Setting-Independence\cite{Toner2003,Pawlowski2010}.

\section{Examples}
Now, let us construct a family of  models compatible with quantum mechanics. 
We consider only models obeying the hypotheses of Uncorrelated Choice and Setting-Independence, 
since the violation of either hypothesis may have controversial implications.
A big help is provided by the trivial-marginals theorem, 
derived (under assumptions slightly stronger than the strictly necessary ones) 
by Colbeck and Renner \cite{Colbeck2008} and Branciard \emph{et al.} \cite{Branciard2008}, 
and rederived (under minimal assumptions) in Appendix \ref{app:trivmarg}. 
This theorem states that all hidden variable models that satisfy Uncorrelated Choice and 
Setting-Independence while reproducing the quantum mechanical predictions, must have a 
$\lambda$-conditioned probability of the form 
\begin{align}\label{eq:prob00}
&P_{\sigma,\tau}(\lambda,\mathbf{a},\mathbf{b}) =\ \frac{1}{4}\biggl\{
1-\sigma\tau \bigl[\mathbf{a}\cdot\mathbf{b}-C(\lambda,\mathbf{a},\mathbf{b})\bigr]\biggr\},
\end{align}
where $C$ has a zero average with the weight $d\mu(\lambda)$ (which represents the probability distribution 
of the $\lambda$), and $C(\lambda,\mathbf{a},\mathbf{a})=0$ 
with the exclusion of the subsets of $\lambda$ where $\mu(\lambda)$ is identically zero. 
In particular all models satisfying Malus's law are excluded by the theorem. 
In other words, assuming Uncorrelated Choice and Malus's law  
(which is a special case of Setting-Independence) 
results in theories incompatible with quantum mechanics. 
The function $C(\lambda,\mathbf{a},\mathbf{b})$ represents the excess or defect correlations 
(with respect to the quantum mechanical correlations) attributable to the hidden variables. 
Indeed, if $\lambda$ could be fixed, either by the specification of a suitable preparation procedure or by 
post-selection provided a prescription for its measurement is given, then the observed spin-spin correlations for 
detectors oriented along $\mathbf{a},\mathbf{b}$ would be $\mathrm{Corr}(\lambda,\mathbf{a},\mathbf{b})=
-\mathbf{a}\cdot\mathbf{b}+C(\lambda,\mathbf{a},\mathbf{b})$. As $C$ must vanish on average, it 
can take both positive and negative values for different $\lambda$s, thus the correlations, at the hidden variable level, 
may be stronger than the quantum mechanical correlations. 

No actual examples of models satisfying the trivial-marginals theorem were made so far, 
and we proceed to fill this gap. 
First, we notice that not any choice of $C$ leads to a positive-defined probability. 
For instance, choosing 
\begin{equation}\label{eq:wrongtrial}
C=\sqrt{1-(\mathbf{a}\cdot\mathbf{b})^2}\ G(\lambda)
\end{equation} 
leads to regions of negative probabilities for any function $G$ having zero average. 
An important result discussed in the next section will be to provide a recipe for 
building up all admissible functions $C$. 
As an example, we choose, e.g.,  
\begin{equation}\label{eq:first}
C(\lambda,\mathbf{a},\mathbf{b})=\left[1-(\mathbf{a}\cdot\mathbf{b})^2\right]G(\lambda), 
\end{equation} 
with $|G(\lambda)|<1/2$ and $\int d\lambda \mu(\lambda) G(\lambda)=0$. 
It is easy to check that the probability in Eq.~\eqref{eq:prob00} is always positive and that upon averaging 
over $\lambda$ Eq.~\eqref{eq:goal} is satisfied. 
Thus we have constructed a family of hidden-variable theories that reproduce quantum mechanics. 
Notice that neither the hypothesis of Reducibility of Correlations, needed in order to derive Bell inequality, 
nor the hypothesis of compliance with Malus's law, needed for Leggett inequality, are satisfied. 

Another family of local models, i.e. requiring and allowing no instantaneous communication between the two wings, is obtained by choosing 
\begin{equation}\label{eq:second}
C(\lambda,\mathbf{u},\mathbf{a},\mathbf{b})=-\mathbf{a}\cdot\mathbf{b} \left[ (\mathbf{a}\cdot\mathbf{u})^2-(\mathbf{b}\cdot\mathbf{u})^2\right]^2 G(\lambda), 
\end{equation}
with, as before, $|G(\lambda)|\le 1/2$ and having zero average, while $\mathbf{u}$ is 
a unit-vector hidden variable. 

It
 can be shown that the model of Cerf \emph{et al.} \cite{Cerf2005} can be reduced 
to the form 
\begin{align}
\label{eq:Cerfmu2}
\mu(\mathbf{u},\mathbf{v}|\mathbf{a},\mathbf{b})=&
\frac{1}{(4\pi)^2} ,\\ 
\nonumber
P_{\sigma,\tau}(\mathbf{u},\mathbf{v},\mathbf{a},\mathbf{b})=&
\frac{1}{4}\biggl[1 -\sigma\tau\, \sgn{(\mathbf{u}\!\cdot\!\mathbf{a})}\, \sgn{(\mathbf{n}_+\!\cdot\!\mathbf{b})}
\\
\label{eq:Cerfp2}
&\times \frac{1+x_{\mathbf{a},\mathbf{u},\mathbf{v}}+y_{\mathbf{b},\mathbf{u},\mathbf{v}}-x_{\mathbf{a},\mathbf{u},\mathbf{v}}y_{\mathbf{b},\mathbf{u},\mathbf{v}}}{2}\biggr],
\end{align}
where 
$x_{\mathbf{a},\mathbf{u},\mathbf{v}}=\sgn{(\mathbf{u}\cdot\mathbf{a})}\ \sgn{(\mathbf{v}\cdot\mathbf{a})}$, 
$y_{\mathbf{b},\mathbf{u},\mathbf{v}}=\sgn{(\mathbf{n}_+\cdot\mathbf{b})}\ \sgn{(\mathbf{n}_-\cdot\mathbf{b})}$, 
and $\mathbf{n}_\pm=\mathbf{u}\pm\mathbf{v}$. 
The reader can verify that this model reproduces the quantum mechanical predictions 
while it satisfies UC and SI, but violates RC and thus falls within the family of models we are interested in. 
\section{Main theorem}
While the examples above were found by trial and error, a careful analysis of the 
presence or lack of negative regions for the probabilities leads to the main result of the 
present paper.
\begin{theorem}
The function $C$ in Eq.~\eqref{eq:prob00} is of the form 
\begin{equation}
C(\lambda,\mathbf{a},\mathbf{b})=
\left[1+\mathbf{a}\cdot\mathbf{b}\right]^{s_+}\left[1-\mathbf{a}\cdot\mathbf{b}\right]^{s_-} 
G(\lambda,\mathbf{a},\mathbf{b}), 
\end{equation}
with 
\begin{subequations}
\begin{align}
\label{constr1} 
&0<|G(\lambda,\mathbf{a},\pm\mathbf{a})|<\infty, \ \text{for } \lambda\in D^\pm_{\mathbf{a}}, \mu(D^\pm_{\mathbf{a}})>0,\\
&
\label{constr2} \int d\mu(\lambda) G(\lambda,\mathbf{a},\mathbf{b}) =0, \\ 
&
\nonumber
\frac{-1}{\left[1-\mathbf{a}\cdot\mathbf{b}\right]^{s_--1}
\left[1+\mathbf{a}\cdot\mathbf{b}\right]^{s_+}}\le G(\lambda,\mathbf{a},\mathbf{b})\\
\label{constr3} 
&\hspace{2.7cm}\le \frac{1}{\left[1-\mathbf{a}\cdot\mathbf{b}\right]^{s_-}\left[1+\mathbf{a}\cdot\mathbf{b}\right]^{s_+-1}}, \\
&
\label{constr4} s_+\ge 1 \ , \  s_-\ge 1, \\
&
\label{constr5} |G(\lambda,\mathbf{a},\pm\mathbf{a})|\le 1/2^{s_\pm} \ \mbox{if}\ s_{\mp}=1.
\end{align}
\end{subequations}
\end{theorem}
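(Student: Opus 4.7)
The plan is to combine the positivity of the four probabilities in Eq.~\eqref{eq:prob00} with the zero-average conclusion of the trivial-marginals theorem, and then analyze the behaviour of $C$ near the two collinear manifolds $\mathbf{a}\cdot\mathbf{b}=\pm 1$ both to justify the product ansatz and to pin down all five constraints.

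\emph{Positivity and forced vanishing.} Requiring $P_{\sigma,\tau}\ge 0$ for the four sign choices $\sigma\tau=\pm 1$ is equivalent to the double inequality
\begin{equation*}
-(1-\mathbf{a}\cdot\mathbf{b})\le C(\lambda,\mathbf{a},\mathbf{b})\le 1+\mathbf{a}\cdot\mathbf{b}.
\end{equation*}
At $\mathbf{b}=+\mathbf{a}$ this reads $0\le C\le 2$; at $\mathbf{b}=-\mathbf{a}$, $-2\le C\le 0$. Combined with $\int d\mu(\lambda)\,C=0$, each sign-definite inequality collapses, so $C$ vanishes $\mu$-almost-everywhere along both collinear manifolds (the theorem already supplied this at $\mathbf{b}=\mathbf{a}$; the argument above supplies it at $\mathbf{b}=-\mathbf{a}$). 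This double vanishing mandates the ansatz $C=(1+\mathbf{a}\cdot\mathbf{b})^{s_+}(1-\mathbf{a}\cdot\mathbf{b})^{s_-}G$. I would then define $s_+$ (resp.\ $s_-$) as the largest exponent for which the quotient $G$ remains finite and nonzero on a $\mu$-positive set of $\lambda$ as $\mathbf{a}\cdot\mathbf{b}\to -1$ (resp.\ $+1$); this yields \eqref{constr1} by construction, and \eqref{constr2} is inherited from the vanishing average of $C$ in the interior and extended to the boundary by continuity. Dividing the positivity inequality by the strictly positive prefactor produces \eqref{constr3} at once.

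\emph{Exponent bounds and endpoint estimates.} Setting $\epsilon=1-\mathbf{a}\cdot\mathbf{b}$, the lower bound on $C$ near $\mathbf{b}=\mathbf{a}$ reads $(2-\epsilon)^{s_+}\epsilon^{s_-}G\ge -\epsilon$, i.e.\ $2^{s_+}\epsilon^{s_--1}|G|\le 1+O(\epsilon)$ as $\epsilon\to 0^+$. Because $G$ has a finite nonzero limit on $D^+_{\mathbf{a}}$ by the choice of $s_-$, this is consistent only if $s_-\ge 1$; in the critical case $s_-=1$ it saturates to $|G(\lambda,\mathbf{a},\mathbf{a})|\le 1/2^{s_+}$, which gives one half each of \eqref{constr4} and \eqref{constr5}. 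A mirror-image argument at $\mathbf{b}=-\mathbf{a}$ supplies $s_+\ge 1$ and the companion endpoint bound.

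\emph{Main obstacle.} The delicate point is extracting a single pair of exponents $(s_+,s_-)$ that is \emph{universal} in $\lambda$ and $(\mathbf{a},\mathbf{b})$: a priori an admissible $C$ could vanish at different rates on different slices of configuration space, and the local asymptotic argument of the preceding paragraph becomes a global statement only after a preliminary continuity or uniformity argument---plausibly supplied by the rotational covariance of the singlet state together with the zero-average condition across all $(\mathbf{a},\mathbf{b})$---has been put in place.
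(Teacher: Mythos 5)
Your proposal is correct and follows essentially the same route as the paper: the product (Frobenius-like) ansatz forced by the vanishing of $C$ at $\mathbf{b}=\pm\mathbf{a}$ with the exponents $s_\pm$ defined so that $G$ stays finite and nonzero on a $\mu$-positive set, the zero-average condition inherited from $\int d\mu(\lambda)\,C=0$, the two-sided bound on $G$ obtained by dividing the positivity constraint by the prefactor, and the small-$\varepsilon$ expansion near the collinear configurations yielding $s_\pm\ge 1$ and the endpoint bound $|G(\lambda,\mathbf{a},\pm\mathbf{a})|\le 1/2^{s_\pm}$ when $s_\mp=1$. The only detail the paper makes more explicit is that this last step needs $G(\lambda,\mathbf{a},\pm\mathbf{a})$ to change sign over $\lambda$ (a consequence of the zero average together with its nonvanishing on a positive-measure set), which is what makes the lower-bound inequality binding, and your closing worry about a single universal pair $(s_+,s_-)$ is acknowledged by the paper only as a pathological, essentially non-analytic case deliberately left outside the theorem.
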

\begin{proof}
In order to satisfy $C(\lambda,\mathbf{a},\pm\mathbf{a})=0$, 
the function $C$ must be of the form
\begin{equation}
C(\lambda,\mathbf{a},\mathbf{b})=
\left[1+\mathbf{a}\cdot\mathbf{b}\right]^{s_+}\left[1-\mathbf{a}\cdot\mathbf{b}\right]^{s_-} 
G(\lambda,\mathbf{a},\mathbf{b}), 
\end{equation}
with $s_+>0,s_->0$ and $0<|G(\lambda,\mathbf{a},\pm\mathbf{a})|<\infty$. 
This is a Frobenius-like expansion, with $s_\pm$ 
determining how fast the function vanishes for $\mathbf{a}\cdot\mathbf{b}=\mp 1$, 
so that Eq.~\eqref{constr1} follows by definition, 
with $D^\pm_\mathbf{a}$ a domain of $\lambda$ 
having non-zero measure (if $G$ is identically zero almost everywhere 
when $\mathbf{a}\cdot\mathbf{b}=\pm 1$ we can then redefine $s_\pm$ and $G$). 
Equation \eqref{constr2} follows from $\int d\mu(\lambda) C(\lambda,\mathbf{a},\mathbf{b}) =0$. 
The positivity of the probability implies the inequalities in Eq.~\eqref{constr3}. These inequalities also guarantee that none of the four probabilities exceeds one, since 
one can readily verify the more precise inequality 
$P_{\sigma,\tau}(\lambda,\mathbf{a},\mathbf{b})\le 1/2, \forall \sigma,\tau,\lambda,\mathbf{a},\mathbf{b}$. 
Furthermore, if we let $\mathbf{a}\cdot\mathbf{b} = \pm(1-\varepsilon)$, we have that 
the probability of $\{\sigma,\pm\sigma\}$ is, to lowest order 
\begin{equation}\label{eq:exp}
P_{\sigma,\pm\sigma}\simeq\frac{\varepsilon}{4}\left[
1\mp 2^{s_\pm}\varepsilon^{s_{\mp}-1} G(\lambda,\mathbf{a},\pm\mathbf{a})\right] , 
\end{equation}
therefore, remembering that $G(\lambda,\mathbf{a},\pm\mathbf{a})$ changes sign when varying $\lambda$, 
in order for the probability to be positive we must have $s_{\pm}\ge 1$, proving Eq.~\eqref{constr4}. Finally, assuming that 
$s_\pm=1$, Eq.~\eqref{eq:exp} implies that 
$|G(\lambda,\mathbf{a},\pm\mathbf{a})|\le 1/2^{s_\mp}$ for $s_{\pm}=1$, so that Eq.~\eqref{constr5} is proved.  
\end{proof}
The two families described by Eqs.~\eqref{eq:first} and \eqref{eq:second} have $s_{+}=s_{-}=1$. There may be pathological cases which are not captured 
by our theorem, e.g., it may happen that $G(\lambda,\mathbf{a},\mathbf{b})$ averages to zero for all $\mathbf{b}\neq \pm\mathbf{a}$, but that 
$G(\lambda,\mathbf{a},\pm\mathbf{a})$ has a constant sign for varying $\lambda$, so that 
Eqs.~\eqref{constr4} and \eqref{constr5} may not be satisfied. 
This behavior requires some essential non-analyticity, and we believe it is not physically interesting. 

Let us provide a constructive recipe to build families of hidden-variable models. 
We choose $s_+=s_-=s$ just for the sake of symmetry. 
Now we pick an arbitrary limited function $f(\lambda,\mathbf{a},\mathbf{b})$ having a finite value for 
$\int d\mu(\lambda) f(\lambda,\mathbf{a},\mathbf{b})$, where $\lambda$ may include vectors, scalars, discrete variables (in which case the integral is a sum).  
 We build the zero-average function 
$g(\lambda,\mathbf{a},\mathbf{b})=f(\lambda,\mathbf{a},\mathbf{b})-\int d\mu(\lambda\rq{}) f(\lambda\rq{},\mathbf{a},\mathbf{b})$. We consider the  supremum and infimum of $g$, $M$ and $m$. By construction $M>0$ and $m<0$. 
If they satisfy 
\begin{equation}
\label{eq:ineq0}
-\frac{(s-1/2)^{2s-1}}{s^s (s-1)^{s-1}} \le m<M\le 
\frac{(s-1/2)^{2s-1}}{s^s (s-1)^{s-1}},
\end{equation}
 then our job is done, since 
 \begin{align}
\nonumber
&\frac{-1}{(1-\mathbf{a}\cdot\mathbf{b})^{s-1}(1+\mathbf{a}\cdot\mathbf{b})^{s}}\le 
-\frac{(s-1/2)^{2s-1}}{s^s (s-1)^{s-1}}\\
&< 
\frac{(s-1/2)^{2s-1}}{s^s (s-1)^{s-1}}\le 
\frac{1}{(1-\mathbf{a}\cdot\mathbf{b})^{s}(1+\mathbf{a}\cdot\mathbf{b})^{s-1}}.
 \end{align}
Otherwise, we multiply $g$ by an appropriate factor, so that Eq.~\eqref{eq:ineq0} is satisfied. 
The resulting function $G(\lambda,\mathbf{a},\mathbf{b})
=[1-(\mathbf{a}\cdot\mathbf{b})^2]^s g(\lambda,\mathbf{a},\mathbf{b})$ satisfies all the hypotheses of the main theorem 
by construction, and we have built a hidden variable model. 

\section{Discussion}
There appears to be a contrast between the results presented in the paragraph above and those reported in two recent papers \cite{Pawlowski2010,Colbeck2011}. 
We shall briefly discuss these contrasts. 

Reference \cite{Pawlowski2010} claims that ``the assumed experimenter’s freedom to choose the
settings ensures that the setting information must be non-locally transferred even
when the SI condition is obeyed" and concludes that the work 
\lq\lq{}provides the general conditions that every non-local hidden 
variable theory has to satisfy in order to allow for violation of the CHSH inequality\rq\rq{}. 
These conclusions are evidently wrong, as we have provided models obeying Setting Independence and 
not only violating the CHSH inequality, but reproducing the full quantum mechanical predictions. 
As shown in Appendix \ref{app:pawlth}, the conclusions of Ref.~\cite{Pawlowski2010} are valid provided that 
they are restricted to models satisfying certain hypotheses. 
One of these hypotheses is that the conditional probability is not extracted from experimental data, but 
is simulated at location $A$ according to some algorithm. 
Here, instead, we are considering the possibility that, in addition to the wave-function, there exist 
further parameters $\lambda$ giving a finer description of the system. 
We agree with Ref. \cite{Pawlowski2010} that, if the (allegedly) experimentally accessible conditional probabilities 
were to be reproduced through an algorithm, then both $\mathbf{b}$ and $\tau$ should be transmitted to $A$.

On the other hand, the results presented here show that quantum mechanics can be extended 
through the specification of 
additional parameters $\lambda$, and that this extension has improved predictive power, since the function 
$C(\lambda,\mathbf{a},\mathbf{b})$ is non-zero, and consequently the predicted joint probability for 
given $\lambda$ differs from the quantum mechanical one: 
\begin{align}
\nonumber
P_{\sigma,\tau}(\lambda,\mathbf{a},\mathbf{b})=&
\frac{1}{4}\left\{1-\sigma\tau\left[\mathbf{a}\cdot\mathbf{b}-C(\lambda,\mathbf{a},\mathbf{b})\right]\right\}\\
\neq&  
P^{QM}_{\sigma,\tau}(\mathbf{a},\mathbf{b}).
\end{align}
This seems to contradict the findings of Ref.~\cite{Colbeck2011}. However, in Ref.~\cite{Colbeck2011},  
the impossibility to have an improved  predicted power refers 
to the marginal probability $P_\sigma(\lambda,\mathbf{a})$, not to the joint one 
$P_{\sigma,\tau}(\lambda,\mathbf{a},\mathbf{b})$. The models discussed in the present work 
predict marginal probabilities of $P_\sigma(\lambda,\mathbf{a})=1/2$ and hence do not contradict 
Ref.~\cite{Colbeck2011}. In other words, the apparent tension is due to the definition of 
`extension of quantum theory'.

\section{Conclusions}
In conclusion, we have established the form of all the hidden variable models able to reproduce the quantum mechanics of a spin-singlet by satisfying both 
the assumptions of \lq\lq{}free will\rq\rq{} and no-signaling, which correspond, respectively, to 
Uncorrelated Choice (Measurement-Independence) and Setting-Independence. 
By contrast, we have assumed the violation of Reducibility of Correlations, as this can never result in superluminal signaling, since it consists in   
the dependence of a conditional probability on a remote outcome, and as such it requires the communication 
of said outcome through means that are necessarily subluminal. 
Rather, the violation of Reducibility of Correlations implies that the quantum correlations cannot be 
attributed to the ignorance of the hidden parameters.

\section*{Acknowledgments}
This work was supported by Funda\c{c}\~{a}o de Amparo \`{a} Pesquisa do 
Estado de Minas Gerais through Process No. APQ-02804-10.

\appendix 
\section{The trivial-marginals theorem}\label{app:trivmarg}
We prove a theorem established in Ref.~\cite{Colbeck2008} assuming that the hidden variables 
can be written $\lambda=\lambda_L\bigcup\lambda_0\bigcup\lambda_R$, with $\lambda_{L,R}$ 
local parameters associated to the measurement at location $L,R$ admitting 
a factorable measure, 
and in Ref.~\cite{Branciard2008} for discrete variables only. 
The proof below relies on none of these additional assumptions. 
\begin{theorem}
All hidden-variable theories that satisfy Uncorrelated Choice and Setting-Independence, 
and that reproduce the quantum mechanical predictions 
for spin singlets predict conditional probabilities of the form 
\begin{equation}\label{eq:prob0}
P_{\sigma,\tau}(\lambda,\mathbf{a},\mathbf{b}) =\ \frac{1}{4}\biggl\{
1-\sigma\tau \bigl[\mathbf{a}\cdot\mathbf{b}-C(\lambda,\mathbf{a},\mathbf{b})\bigr]\biggr\},
\end{equation}
where
\begin{subequations}
\begin{align}
\label{eq:zeroav}
&\int d\mu(\lambda)C(\lambda,\mathbf{a},\mathbf{b})=\ 0,\\
\label{eq:correlations}
&C(\lambda,\mathbf{a},\pm\mathbf{a})=\ 0,\\
\label{eq:ineq}
&\left|\mathbf{a}\cdot\mathbf{b}-C(\lambda,\mathbf{a},\mathbf{b})\right|\le\ 1,
\end{align}
\end{subequations}
with $\mu(\lambda)$ a measure. 
\end{theorem}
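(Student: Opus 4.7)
The plan is to parametrize the joint conditional probability in the most general no-signaling form and then prove triviality of the marginals by combining the perfect anti-correlations of the singlet at parallel settings with a chained Bell-type inequality applied to the marginals themselves. Using SI to localize the marginals, I would write
\begin{equation*}
P_{\sigma,\tau}(\lambda,\mathbf{a},\mathbf{b})=\frac{1}{4}\bigl[1+\sigma m_1(\lambda,\mathbf{a})+\tau m_2(\lambda,\mathbf{b})+\sigma\tau E(\lambda,\mathbf{a},\mathbf{b})\bigr],
\end{equation*}
average over $\lambda$ with $d\mu(\lambda)$ (setting-independent by UC), and equate to $P^{QM}$ to obtain $\int m_1\,d\mu=\int m_2\,d\mu=0$ and $\int E\,d\mu=-\mathbf{a}\cdot\mathbf{b}$. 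The perfect anti-correlation $P^{QM}_{\sigma,\sigma}(\mathbf{a},\mathbf{a})=0$ combined with $P\ge 0$ forces $P_{\sigma,\sigma}(\lambda,\mathbf{a},\mathbf{a})=0$ for $\mu$-a.e.\ $\lambda$; summing and subtracting these identities over $\sigma$ yields the pointwise relations $E(\lambda,\mathbf{a},\mathbf{a})=-1$ and $m_1(\lambda,\mathbf{a})+m_2(\lambda,\mathbf{a})=0$ almost everywhere.

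The main step---and the hard part---is upgrading the last equality at coincident axes into $m_1\equiv m_2\equiv 0$ at arbitrary directions, via a chained inequality. Positivity of $P$ is equivalent to $|m_1(\lambda,\mathbf{a})+m_2(\lambda,\mathbf{b})|\le 1+E$ and $|m_1(\lambda,\mathbf{a})-m_2(\lambda,\mathbf{b})|\le 1-E$; substituting $m_2(\lambda,\mathbf{c})=-m_1(\lambda,\mathbf{c})$ recasts these as $|m_1(\lambda,\mathbf{a})\mp m_1(\lambda,\mathbf{b})|\le 1\pm E(\lambda,\mathbf{a},\mathbf{b})$. For any fixed $\mathbf{a}_0$ I would pick $N+1$ coplanar unit vectors $\mathbf{a}_k$ with uniform angular step $\pi/N$ and $\mathbf{a}_N=-\mathbf{a}_0$. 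Telescoping the ``minus'' inequality over consecutive $k$ via the triangle inequality, adding the ``plus'' inequality at the endpoints, and invoking the elementary identity $|x-y|+|x+y|=2\max(|x|,|y|)$ gives
\begin{equation*}
2\max\bigl(|m_1(\lambda,\mathbf{a}_0)|,|m_1(\lambda,-\mathbf{a}_0)|\bigr)\le N+1+\sum_{k=0}^{N-1}E(\lambda,\mathbf{a}_k,\mathbf{a}_{k+1})-E(\lambda,\mathbf{a}_0,-\mathbf{a}_0).
\end{equation*}
Integrating against $\mu$ collapses the right-hand side to $N[1-\cos(\pi/N)]$, which tends to $0$ as $N\to\infty$, forcing $m_1(\lambda,\mathbf{a}_0)=0$ for $\mu$-a.e.\ $\lambda$; since $\mathbf{a}_0$ is arbitrary, $m_1\equiv m_2\equiv 0$. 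The main subtlety is recognizing that the chained-Bell structure should be applied to the single-party \emph{marginals} rather than to CHSH-style correlators, and verifying that the countable union over $N$ and over a dense set of directions of the exceptional $\mu$-null sets remains $\mu$-null, so that no additional structural assumption on $\lambda$ is needed.

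With trivial marginals, $P_{\sigma,\tau}=\frac{1}{4}(1+\sigma\tau E)$, so defining $C(\lambda,\mathbf{a},\mathbf{b})=\mathbf{a}\cdot\mathbf{b}+E(\lambda,\mathbf{a},\mathbf{b})$ brings $P$ into the form \eqref{eq:prob0}. Constraint \eqref{eq:zeroav} follows from $\int E\,d\mu=-\mathbf{a}\cdot\mathbf{b}$; \eqref{eq:correlations} follows from $E(\lambda,\mathbf{a},\mathbf{a})=-1$ a.e.\ together with the analogous perfect-correlation argument at $\mathbf{b}=-\mathbf{a}$, which yields $E(\lambda,\mathbf{a},-\mathbf{a})=+1$ a.e.; and \eqref{eq:ineq} is immediate from $|E|\le 1$.
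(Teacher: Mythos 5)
Your proof is correct, but it reaches the key conclusion $A\equiv B\equiv 0$ (your $m_1\equiv m_2\equiv 0$) by a genuinely different route than the paper. The paper's Appendix~\ref{app:trivmarg} argues locally: it expands $P_{\sigma,\sigma}(\lambda,\mathbf{a},\mathbf{a}+\boldsymbol{\delta})$ to first order in $\boldsymbol{\delta}$, uses positivity to force the gradients of $A$ and $C$ at $\mathbf{n}=\mathbf{a}$ to be radial, and then invokes an invariance ansatz $A(\lambda,\mathbf{n})=A(\lambda,\mathbf{n}\cdot\mathbf{p}_j)$ together with oddness of $A$ to conclude $A=0$; this implicitly assumes differentiability of $A,C$ in the settings and a particular functional representation. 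You instead chain the positivity constraints $|m_1(\lambda,\mathbf{a})\mp m_1(\lambda,\mathbf{b})|\le 1\pm E(\lambda,\mathbf{a},\mathbf{b})$ around a half great circle, integrate against the setting-independent $\mu$ using only the exact singlet averages $\int E\,d\mu=-\mathbf{a}\cdot\mathbf{b}$, and let $N\to\infty$ so that $N[1-\cos(\pi/N)]\to 0$ squeezes the marginals to zero $\mu$-almost everywhere for each fixed direction. This is a chained-Bell-type argument in the spirit of the Colbeck--Renner and Branciard \emph{et al.} derivations, but applied directly to the single-party biases at the hidden-variable level, and it needs neither the factorable decomposition of $\lambda$ nor discreteness, nor any smoothness of the $\lambda$-conditioned probabilities in the settings --- so it actually realizes the ``minimal assumptions'' goal more robustly than the paper's perturbative argument (which, on the other hand, sets up the machinery reused in the main theorem's Frobenius-type analysis). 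Two small remarks: the dense-set-of-directions step is unnecessary, since the argument can be run for each fixed $\mathbf{a}_0$ with its own countable union of null sets (over $k$ and $N$); and your endpoint identity $|x-y|+|x+y|=2\max(|x|,|y|)$ and bookkeeping of the averages ($-N\cos(\pi/N)$ from the chain, $+1$ from the endpoint) check out.
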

\begin{proof}
Consider a hidden variable theory that tries to reproduce the quantum mechanical predictions 
for a spin singlet. It must satisfy Eq.~\eqref{eq:goal},  
with $d\mu(\lambda|\mathbf{a},\mathbf{b})$ a positive measure. Generally $d\mu(\lambda|\mathbf{a},\mathbf{b})=\mu(\lambda|\mathbf{a},\mathbf{b}) d\lambda$, 
and the positive normalized generalized function $\mu$ can be interpreted as the probability density of $\lambda$ for given $\mathbf{a},\mathbf{b}$.  
Measurement-Independence implies that the measure does not depend 
on the settings of the detectors, i.e., $d\mu(\lambda|\mathbf{a},\mathbf{b})=d\mu(\lambda)$ or 
$\mu(\lambda|\mathbf{a},\mathbf{b})=\mu(\lambda)$, 
Without loss of generality, we put 
\begin{equation}
P_{\sigma,\tau}(\lambda,\mathbf{a},\mathbf{b})=\frac{1}{4}\left[1-\sigma\tau \mathbf{a}\cdot\mathbf{b}+ 
\Delta_{\sigma,\tau}(\lambda,\mathbf{a},\mathbf{b})\right].
\end{equation}
The function $\Delta_{\sigma,\tau}(\lambda,\mathbf{a},\mathbf{b})$, by definition, satisfies 
\begin{align}\label{eq:avtozero}
\int d\mu(\lambda) \Delta_{\sigma,\tau}(\lambda,\mathbf{a},\mathbf{b}) =& 0,\\
\sum_{\sigma,\tau} \Delta_{\sigma,\tau}(\lambda,\mathbf{a},\mathbf{b})=& 0.
\end{align}
and it can be written as 
\begin{equation}
\Delta_{\sigma,\tau}(\lambda,\mathbf{a},\mathbf{b}) = 
\sigma A(\lambda,\mathbf{a},\mathbf{b})+
\tau B(\lambda,\mathbf{a},\mathbf{b})+\sigma\tau C(\lambda,\mathbf{a},\mathbf{b}),
\end{equation}
with all three functions satisfying Eq.~\eqref{eq:avtozero}. 
In particular, Eq.~\eqref{eq:zeroav} is satisfied. 
Setting-Independence requires that the marginal probability of observing the outcome 
$\sigma$ at detector $\mathbf{a}$ is not influenced by the direction $\mathbf{b}$ chosen for the other 
detector, and vice versa, namely
\begin{align}
P_{\sigma}(\lambda,\mathbf{a},\mathbf{b})\equiv\sum_\tau P_{\sigma,\tau}(\lambda,\mathbf{a},\mathbf{b})=& 
P_{\sigma}(\lambda,\mathbf{a}),\\ 
P_{\tau}(\lambda,\mathbf{a},\mathbf{b})\equiv\sum_\sigma P_{\sigma,\tau}(\lambda,\mathbf{a},\mathbf{b})=& 
P_{\tau}(\lambda,\mathbf{b}). 
\end{align}
Thus, we have that 
\begin{equation}
A(\lambda,\mathbf{a},\mathbf{b}) = A(\lambda,\mathbf{a}) \ , \ B(\lambda,\mathbf{a},\mathbf{b}) = 
B(\lambda,\mathbf{b}).
\end{equation}
In particular, quantum mechanics predicts perfect (anti)correlations when $\mathbf{a}=-\mathbf{b}$ 
($\mathbf{a}=\mathbf{b}$). This implies that (here it is fundamental that the measure is independent of 
$\mathbf{a}$ and $\mathbf{b}$)
\begin{align}
\label{eq:symm}
&A(\lambda,\mathbf{a})+B(\lambda,\mathbf{a})=A(\lambda,\mathbf{a})-B(\lambda,-\mathbf{a})=0,\\
\label{eq:corr}
&C(\lambda,\mathbf{a},\mathbf{a})=C(\lambda,\mathbf{a},-\mathbf{a})=0,
\end{align}
identically almost everywhere\footnote{`Almost everywhere' means in all subsets having non-zero measure. If 
$\lambda$ has a discrete distribution, so that $\mu(\lambda)$ is a sum of $\delta$-functions, `almost everywhere' 
means, paradoxically, only at the discrete values of $\lambda$.} in $\lambda$ and in $\mathbf{a}$. 
Equation~\eqref{eq:symm} is satisfied by $B(\lambda,\mathbf{a})=-A(\lambda,\mathbf{a})$, with 
$A(\lambda,-\mathbf{a})=-A(\lambda,\mathbf{a})$ an odd function of its second argument. 
Consider now values close to the perfect anticorrelation point, $\mathbf{b}=(\mathbf{a}+\boldsymbol{\delta})/\sqrt{1+\delta^2}$, 
with $\mathbf{a}\cdot\boldsymbol{\delta}=0$ and $|\boldsymbol{\delta}|\ll 1$.  
To first order, the probability $P_{\sigma,\sigma}(\lambda,\mathbf{a},\mathbf{b})$ reads 
\begin{equation}\label{eq:critical}
P_{\sigma,\sigma}(\lambda,\mathbf{a},\mathbf{a}+\boldsymbol{\delta}) = 
-\frac{1}{4}\boldsymbol{\delta}\cdot\left[
\sigma \frac{\partial A(\lambda,\mathbf{n})}{\partial \mathbf{n}} 
-\frac{\partial C(\lambda,\mathbf{a},\mathbf{n})}{\partial \mathbf{n}}\right]_{\mathbf{n}=\mathbf{a}} ,
\end{equation}
where $\mathbf{n}$ is a generic placeholder for a unit vector. 
Clearly, Eq.~\eqref{eq:critical} cannot be positive for all $\boldsymbol{\delta}$. If it is positive for a value 
$\boldsymbol{\delta}_0$, it will be negative for $\boldsymbol{\delta}=-\boldsymbol{\delta_0}$. 
The only possibility is that the term inside the brackets in \eqref{eq:critical} vanishes identically in 
$\lambda$ and $\mathbf{a}$ or that it is proportional to $\mathbf{a}$. 
By changing the sign of $\sigma$, summing and subtracting, we notice that the following two identities should hold
\begin{align}
\label{eq:id1}
&\left.\frac{\partial A(\lambda,\mathbf{n})}{\partial \mathbf{n}}\right|_{\mathbf{n}=\mathbf{a}}\!\!\!\!=\ 
f(\lambda,\mathbf{a})\mathbf{a},\\
\label{eq:id2}
&\left.\frac{\partial C(\lambda,\mathbf{a},\mathbf{n})}{\partial \mathbf{n}}\right|_{\mathbf{n}=\mathbf{a}}\!\!\!\!=\ g(\lambda,\mathbf{a})\mathbf{a}.
\end{align}
Since $A$ is an odd function and $\mathbf{a}$ a unit vector, Eq.~\eqref{eq:id1} implies that 
$A(\lambda,\mathbf{n})=0$: Indeed invariance requires that the dependence on the argument 
can be only of the form 
$A(\lambda,\mathbf{n})=A(\lambda,\mathbf{n}\cdot\mathbf{p}_j)$, where $\mathbf{p}_j$ are vectors 
either fixed or depending on the hidden variables (possibly being some of the hidden variables). 
We have then that 
\begin{equation}
A\left(\lambda,\frac{\mathbf{a}+\boldsymbol{\delta}}{\sqrt{1+\delta^2}}\right)-A(\lambda,\mathbf{a})\simeq 
\sum_j \mathbf{p}_j\cdot\boldsymbol{\delta} \left.\frac{\partial A(\lambda,x_j)}{\partial x_j}\right|_{x_j=\mathbf{a}\cdot\mathbf{p}_j}.
\end{equation}
This implies that 
\begin{equation}
\left.\frac{\partial A(\lambda,x_j)}{\partial x_j}\right|_{x_j=\mathbf{a}\cdot\mathbf{p}_j}=0 , 	
\end{equation}
and hence $A(\lambda,x_j)=f(\lambda)$. Since $A(\lambda,-x_j)=-A(\lambda,-x_j)$, $A(\lambda,x_j)=0$, i.e. the validity of Eq.~\eqref{eq:prob0} was proved. 
Then Eqs.~\eqref{eq:correlations} and \eqref{eq:ineq} follow from the positive-definiteness of the probability, the former being implied by the latter and by Eq.~\eqref{eq:zeroav}. 
\end{proof}
\section{Conditions of validity for the theorem of Paw{\l}owski et al.}\label{app:pawlth}
Reference \cite{Pawlowski2010} proves that a family of hidden variable theories 
requires one of the party to have information about both the remote setting and the remote outcome. 
In the following, we clarify the assumptions actually made in Ref.~\cite{Pawlowski2010}, 
and show that the models discussed in the present manuscript do not comply with these assumptions, 
so that there is no contradiction. 
First of all, we notice that two hypotheses are made explicitly: ``freedom of choice" and ``realism". 
The second hypothesis is but counterfactual-definiteness, i.e. the existence of a master probability 
$P(A_0,A_1,B_0,B_1|a_0,a_1,b_0,b_1)$ such that the observed probability $P(A_j,B_k|a_j,b_k)$ 
for any two settings $a_j,b_k$ 
is its marginal. It is well known \cite{Pitowsky1994,DiLorenzo2012d} that the hypothesis of counterfactual-definiteness 
alone is sufficient in order to derive Bell-type inequalities. 
Thus, Ref.~\cite{Pawlowski2010} is not actually using the hypothesis of ``realism'', or the models considered 
could not possibly violate the CHSH inequality.  This leaves only the hypothesis of ``freedom of choice", 
which is akin to what in the present paper is referred to as ``Uncorrelated Choice" (or Measurement Independence). 
There is a difference, however, in that the ``freedom of choice" used in Ref.~\cite{Pawlowski2010} refers 
to freedom only within two possible choices.\footnote{As in principle each detector can take any orientation along the unit 
sphere, calling ``freedom of choice" this hypothesis is somewhat misleading.}
Furthermore, in addition to this hypothesis, Ref. \cite{Pawlowski2010} makes other assumptions that are sparse in the text and not stated as hypotheses, and the conclusions are not restricted to the models satisfying said assumptions. 
Let us enunciate all the hypotheses actually made:  
\begin{enumerate}
\item $A$ and $B$ are limited to two choices each $\mathbf{a}_j,\mathbf{b}_k$, $j,k=0,1$. 
\item Within this restriction, the choices are not influenced by the hidden parameters, and \emph{vice versa}, so that 
$p(j,k|\lambda)=p(j,k)=1/4$ and $\mu(\lambda|j,k)=\mu(\lambda)$. 
\item $A$ and $B$ are mimicking the results of a measurement, they are not actually performing one. 
To this goal they are sharing an information $\lambda$. 
\item
$B$ gives an output $\tau$ according to an algorithm that provides a number, $0\le P^B_+(\lambda,k)\le 1$: if a random number 
between $0$ and $1$ is larger than $P^B_+(\lambda,k)$, $B$ will output $\tau=-1$, otherwise $\tau=+1$.
\item $A$ receives an information $X$ from $B$, in addition to $\lambda$, and tries to mimic 
the conditional probability $P_\sigma(\lambda,\mathbf{a},\mathbf{b},\tau)$ by an algorithm 
providing a threshold $P^A_+(\lambda,j,X)$. 
\end{enumerate}
Reference \cite{Pawlowski2010} demonstrates that under hypotheses (1)-(5), if the CHSH inequality is violated then 
$\max_k\{Prob(k|\lambda,X)\}>1/2$ and $\max_\tau\{Prob(\tau|\lambda,X)\}>1/2$, at least for some $\lambda,X$. 
Hypothesis (1) is crucial: in the Toner and Bacon model, the information $X=c$
does not allow to extract any information about the remote outcome, if $\mathbf{a},\mathbf{b}$ can vary 
over the whole unit sphere. 
Nevertheless, the models presented herein are valid for any distribution of the settings, and they can 
be restricted to two binary choices of polarizations. Hence, the reason of the apparent discrepancy 
does not reside in hypothesis (1). 
The key, instead, is hypothesis (5): the observer at $A$ is not measuring a physical property of a system, but is 
calculating a number through an algorithm, which receives $\lambda,X$ as an input, 
and mimicking a conditional probability accordingly. 
By contrast, let us see how $A$ would estimate the conditional probability from the experimental data if 
a measurement was actually performed: First, 
$A$ and $B$ make a large number of measurements. They 
disclose the settings $\mathbf{a},\mathbf{b}$ and the outcomes $\sigma,\tau$ that they used and observed in each 
individual trial. Then $A$ selects the data for which, say, $\mathbf{b}=\mathbf{b}_0$ and $\tau=\tau_0$, and 
estimates the conditional probability of obtaining $\sigma$ with the frequency that was observed in this subset of data. 
Thus, both setting and outcome information must be sent to $A$ in order to extract the conditional probabilities.  
The results of Ref. \cite{Pawlowski2010} put some restrictions on the models that try to reproduce the 
conditional probability through an algorithm, but do not affect models, like the ones introduced in our paper, 
that assume the existence of additional parameters $\lambda$ giving a finer description of a physical system. 
In this case, the outcome and setting information needs to be sent only after the measurements have been performed, 
by the very definition of conditional probability.

\end{document}